\title{Multilinear Cryptography using Nilpotent Groups}
\author[1]{Delaram Kahrobaei}
\author[2]{Antonio Tortora}
\author[3]{Maria Tota}
\affil[1]{University of York, New York University; e-mail: dk2572@nyu.edu}
\affil[2]{Dipartimento di Matematica e Fisica, Universit\`a della Campania ``Luigi Vanvitelli'', Caserta, Italy; e-mail: antonio.tortora@unicampania.it}
\affil[3]{Dipartimento di Matematica, Universit\`a di Salerno, Fisciano (SA), Italy; e-mail: mtota@unisa.it}
\date{ }
\begin{document}
\maketitle

\begin{abstract}
In this paper we generalize the definition of a multilinear map to arbitrary groups and develop a novel idea of multilinear cryptosystem using nilpotent group identities.\\
 
\noindent{\bf 2010 Mathematics Subject Classification:} 20F18, 94A60.\\
{\bf Keywords:} multilinear map, nilpotent group, public-key cryptography.

\end{abstract}

\section{Introduction}
In recent years multilinear maps have attracted attention in cryptography community. The idea has been first proposed by Boneh and Silverberg \cite{BS}.
For $n>2$ the existence of $n$-linear maps is still an open question. One of the main applications of multilinear maps is their use for indistinguishability obfuscation. For example in \cite{LT} Lin and Tessaro proved that trilinear maps are sufficient for the purpose of achieving indistinguishability obfuscation. Recently, Huang \cite{H} constructed cryptographic trilinear maps that involve simple, non-ordinary abelian varieties over finite fields.

Group-based cryptography has some new direction to offer to answer this question. A bilinear cryptosystem using the discrete logarithm problem in matrices coming from a linear representation of a group of nilpotency class $2$ has been proposed in \cite{MS}.

In this paper, we propose multilinear cryptosystems using identities in nilpotent groups, in which the security is based on the discrete logarithm problem.

\section{Multilinear Maps in Cryptography}

Let $n$ be a positive integer. For cyclic groups $G$ and $G_T$ of prime order $p$, a map $e:G^n \rightarrow G_T$ is said to be a (symmetric) $n$-linear map (or a multilinear map) if for any $a_1, \dots, a_n \in \mathbb{Z}$ and $g_1, \dots, g_n \in G$, we have $$e(g_1^{a_1}, \dots, g_n^{a_n}) = e(g_1, \dots, g_n)^{a_1 \dots a_n}$$
and further $e$ is non-degenerate in the sense that $e(g,\dots,g)$ is a generator of $G_T$ for any generator $g$ of $G$.

\subsection{Fully Homomorphic Encryption and Graded Encoding Schemes}

One of the interesting importance of multilinear maps arises in the notion of one of the revolution which swept the world of cryptography, namely fully homomophic encryption (FHE).
The intuition is that FHE ciphertexts behave like the exponents of group elements in a multilinear map, the so called graded encoding scheme \cite{GGH13a}.
Such a scheme is a family of efficient cyclic groups $G_1, \dots, G_n$ of the same prime order $p$ together with efficient non-degenerate bilinear pairings $e: G_i \times G_j \rightarrow G_{i+j}$ whenever $i+j \leq n$. In other words, if we fix a family of generators $g_i$ of the $G_i$'s in such a way that $g_{i+j} = e(g_i, g_j)$, we can add exponents within a given group $G_i$
$${g_i^a}\cdot {g_i^b} = g_i^{a+b};$$
and multiply exponents from two groups $G_i$, $G_j$ as long as $i + j \leq n$: $$e(g_i^a,g_j^b) = g_{i+j}^{a\cdot b}.$$
This makes $g_i^a$ somewhat similar to an FHE encryption of $a$.

\subsection{Generalization of Multilinear Maps to any Group}

Here we generalize the definition of a multilinear map to arbitrary groups $G$ and $G_T$. We say that a map $e:G^n \rightarrow G_T$ is a (symmetric) $n$-linear map (or a multilinear map) if for any $a_1, \dots, a_n \in \mathbb{Z}$ and $g_1, \dots, g_n \in G$, we have $$e(g_1^{a_1}, \dots, g_n^{a_n}) = e(g_1, \dots, g_n)^{a_1 \dots a_n}.$$
Notice that the map $e$ is not necessarily linear in each component.
In addition, we say that $e$ is non-degenerate if there exists $g\in G$ such that $e(g,\dots,g)\neq 1$.

\section{Preliminaries}

\subsection{Nilpotent and Engel Groups}

A group $G$ is said to be nilpotent if it has a finite series 
$$\{1\}=G_{0}< G_{1}<\dots< G_{n}=G$$
which is central, that is, each $G_i$ is normal in $G$ and $G_{i+1}/G_{i}$ is contained in the center of $G/G_{i}$. The length of a shortest central series is the (nilpotency) class of $G$. Of course, nilpotent groups
of class at most 1 are abelian. A great source of nilpotent groups is the class of finite $p$-groups, i.e., finite groups whose orders are powers of a prime $p$.

Close related to nilpotent groups is the calculus of commutators. Let $g_{1},\dots, g_n$ be elements of a group $G$. We will use the following commutator notation: $[g_1, g_2] = g_1^{-1} g_2^{-1} g_1 g_2$.
More generally, a simple commutator of weight $n\geq 2$ is defined recursively by the rule
$$[g_{1},\dots,g_{n}]=
[[g_{1},\ldots,g_{n-1}],g_{n}],$$
where by convention $[g_1]=g_1$. A useful shorthand notation is
$$[x,_n g]=
[x, \underbrace{g,\dots, g}_n].$$
For the reader convenience, we recall the following property of commutators:
\begin{equation}\label{property}
[xy,z]=[x,z]^y[y,z]\ \ {\rm and}\ \ [x,yz]=[x,z][x,y]^z\ {\rm for\ all}\ x,y,z\in G.
\end{equation}
For further basic properties of commutators we refer to \cite[5.1]{Rob}.

It is useful to be able to form commutators of subsets as well as elements. Let $X_1, X_2, \dots$ be nonempty subsets of a group $G$. Define the commutator subgroup of $X_1$ and $X_2$ to be
$$[X_1,X_2]=\langle[x_1,x_2] \,\vert\, x_1\in X_1,x_2\in X_2\rangle.$$
More generally, let
$$[X_{1},\dots,X_{n}]=
[[X_{1},\ldots,X_{n-1}],X_{n}]$$
where $n\geq 2$. Then, there is a natural way of generating a descending sequence of commutator subgroups of a group, by repeatedly commuting with $G$. The result is
a series
$$G=\gamma_1(G)\geq \gamma_2(G)\geq\dots$$
in which $\gamma_{n+1}(G)=[\gamma_n(G),G]$. This is called the lower central series of $G$ and it does not in general reach $1$. Notice that $\gamma_n(G)/ \gamma_{n+1}(G)$ lies in the center of $G/\gamma_{n+1}(G)$.

A useful characterization of nilpotent groups, in terms of commutators, is the following.
 
\begin{lemma}\label{nilpotent}
A group $G$ is nilpotent of class at most $n\geq 1$ if and only if the identity $[g_1,\dots,g_{n+1}]=1$ is satisfied in $G$, that is $\gamma_{n+1}(G)=1$. In particular, in a nilpotent group of class $n$, the subgroup $\gamma_n(G)$ is central.
\end{lemma}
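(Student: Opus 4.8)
The plan is to prove both directions of the equivalence, working from the definitions of nilpotency and of the lower central series.

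For the forward direction, suppose $G$ is nilpotent of class at most $n$, with a central series $\{1\}=G_0 < G_1 < \dots < G_m = G$ where $m \leq n$. First I would show by induction on $i$ that $\gamma_{i+1}(G) \leq G_{m-i}$ for $0 \leq i \leq m$. The base case $\gamma_1(G) = G = G_m$ is immediate. For the inductive step, using that $G_{m-i}/G_{m-i-1}$ is central in $G/G_{m-i-1}$, we get $[G_{m-i}, G] \leq G_{m-i-1}$; combining this with $\gamma_{i+1}(G) \leq G_{m-i}$ from the induction hypothesis yields $\gamma_{i+2}(G) = [\gamma_{i+1}(G), G] \leq [G_{m-i}, G] \leq G_{m-i-1}$, which completes the induction. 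Taking $i = m$ gives $\gamma_{m+1}(G) \leq G_0 = \{1\}$, and since $m \leq n$ we have $\gamma_{n+1}(G) \leq \gamma_{m+1}(G) = \{1\}$, i.e.\ the identity $[g_1, \dots, g_{n+1}] = 1$ holds in $G$.

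For the converse, suppose $\gamma_{n+1}(G) = \{1\}$. Then the lower central series truncates to a finite descending chain $G = \gamma_1(G) \geq \gamma_2(G) \geq \dots \geq \gamma_{n+1}(G) = \{1\}$. Reindexing by setting $G_i = \gamma_{n+1-i}(G)$, this becomes an ascending chain from $\{1\}$ to $G$. Each $\gamma_j(G)$ is normal in $G$ (it is generated by commutators and is verbal, hence fully invariant, so in particular normal — or one argues directly that $[\gamma_j(G), G] \leq \gamma_{j+1}(G) \leq \gamma_j(G)$), and the remark already recorded in the excerpt states precisely that $\gamma_j(G)/\gamma_{j+1}(G)$ lies in the center of $G/\gamma_{j+1}(G)$. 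Hence $\{G_i\}$ is a central series of length $n$, so $G$ is nilpotent of class at most $n$. Finally, for the last sentence: if $G$ has class exactly $n$ then $\gamma_{n+1}(G) = \{1\}$ but $\gamma_n(G) \neq \{1\}$, and $[\gamma_n(G), G] = \gamma_{n+1}(G) = \{1\}$ shows $\gamma_n(G)$ is central in $G$.

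I do not expect a serious obstacle here; the result is standard and the pieces are essentially bookkeeping. The one point requiring mild care is the normality of the terms $\gamma_j(G)$ and the interleaving of the two series in the forward direction (getting the indices right so that a class-$m$ group with $m \leq n$ still yields $\gamma_{n+1}(G) = 1$); both are routine once the induction $\gamma_{i+1}(G) \leq G_{m-i}$ is set up correctly. If the paper prefers brevity, most of this can be cited from \cite[5.1]{Rob}, but I would include the short induction above for self-containedness.
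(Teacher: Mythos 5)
Your argument is correct, and it is worth noting that the paper itself offers no proof of this lemma at all: it is stated as a standard characterization, with the background facts implicitly deferred to \cite[5.1]{Rob}. Your two inductions (the interleaving $\gamma_{i+1}(G)\leq G_{m-i}$ for the forward direction, and the observation that the truncated lower central series is itself a central series for the converse) are exactly the standard proof, and the final remark about $\gamma_n(G)$ being central follows correctly from $[\gamma_n(G),G]=\gamma_{n+1}(G)=1$. The only point you pass over silently --- as does the paper --- is the identification of ``the identity $[g_1,\dots,g_{n+1}]=1$ holds in $G$'' with ``$\gamma_{n+1}(G)=1$'': one direction is immediate since every simple commutator of weight $n+1$ lies in $\gamma_{n+1}(G)$, but the other requires knowing that $\gamma_{n+1}(G)$ is actually \emph{generated} by the simple commutators of weight $n+1$ (not merely by commutators $[x,g]$ with $x$ an arbitrary element of $\gamma_n(G)$), which is a short induction using the commutator expansion formulas (1) and is recorded in \cite[5.1.11]{Rob}. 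If you want the proof fully self-contained, add a sentence to that effect; otherwise your write-up is complete and, if anything, more detailed than what the paper provides.
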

 
Among the best known generalized nilpotent groups are the so-called Engel groups. A group $G$ is called $n$-Engel if $[x,_n y]= 1$ for all $x,y \in G$.
If $G$ is nilpotent of class $n$, then $G$ is $n$-Engel. Also, there are nilpotent groups of class $n$ which are not $(n-1)$-Engel. For example, given a prime $p$, the wreath product $G={\mathbb{Z}}_p \wr {\mathbb{Z}}_p$ is nilpotent of class $p$ but not $(p-1)$-Engel \cite[Theorem 6.2]{Liebeck}. 

Conversely, any finite $n$-Engel group is nilpotent, by a well-known result of Zorn \cite[12.3.4]{Rob}.

\subsection{Nilpotent Group Identities}
 
The next result is a straightforward application of (\ref{property}), together with Lemma \ref{nilpotent}. 

\begin{lemma}\label{1}
Let $G$ be a nilpotent group of class $n>1$ and let $a$ be a nonzero integer. Then, for all $g_1,\dots,g_n\in G$, we have
$$[[g_{1},\ldots,g_{n-1}]^a,g_{n}]=[g_{1},\dots,g_{n}]^a$$
and 
$$[g_{1},\ldots,g_{n-1},g_{n}^a]=[g_{1},\dots,g_{n}]^a.$$
\end{lemma}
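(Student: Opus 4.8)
The plan is to deduce both identities from the expansion rules in (\ref{property}) together with the observation that, since $G$ has class $n$, the relevant commutators are central. Put $c:=[g_1,\dots,g_{n-1}]\in\gamma_{n-1}(G)$. Then for every $g\in G$ the commutator $[c,g]$ lies in $[\gamma_{n-1}(G),G]=\gamma_n(G)$, which is central by Lemma \ref{nilpotent}; more generally $[h,g]\in\gamma_n(G)$ is central whenever $h\in\gamma_{n-1}(G)$. This is the only structural fact needed, and it is precisely what will make the conjugations appearing below disappear. Note also that the convention $[g_1]=g_1$ makes the case $n=2$ (with $c=g_1$, so that $[c,g]\in\gamma_2(G)$) a special instance of the general argument rather than something to be treated separately.

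First I would settle the case $a>0$ by induction on $a$, the base case $a=1$ being trivial. For the first identity, write $c^a=c\cdot c^{a-1}$ and apply the left-hand rule of (\ref{property}): $[c^a,g_n]=[c,g_n]^{c^{a-1}}[c^{a-1},g_n]=[c,g_n]\,[c^{a-1},g_n]$, where centrality of $[c,g_n]$ removed the conjugation; the inductive hypothesis then gives $[c,g_n]^a$. The second identity is symmetric: write $g_n^a=g_n\cdot g_n^{a-1}$ and use the right-hand rule of (\ref{property}) to get $[c,g_n^a]=[c,g_n^{a-1}][c,g_n]^{g_n^{a-1}}=[c,g_n^{a-1}]\,[c,g_n]$, and induction again concludes.

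For negative $a$ it suffices to treat $a=-1$ and then iterate. Applying the same two expansions to $1=[cc^{-1},g_n]$ and to $1=[c,g_ng_n^{-1}]$, and once more using centrality, yields $[c^{-1},g_n]=[c,g_n]^{-1}$ and $[c,g_n^{-1}]=[c,g_n]^{-1}$. Since $c^{-1}\in\gamma_{n-1}(G)$, the element $[c^{-1},g_n]$ is again central, so the case $a>0$ already proved applies to the pair $(c^{-1},g_n)$; hence for $a=-m$ with $m>0$ we obtain $[c^a,g_n]=[(c^{-1})^m,g_n]=[c^{-1},g_n]^m=[c,g_n]^{-m}=[c,g_n]^a$, and likewise $[c,g_n^a]=[c,(g_n^{-1})^m]=[c,g_n^{-1}]^m=[c,g_n]^a$.

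I do not expect a genuine obstacle: the argument is a bookkeeping exercise with (\ref{property}). The one point that needs attention is to verify, at each application of (\ref{property}), that the commutator whose conjugate is formed really lies in $\gamma_n(G)$, so that centrality may legitimately be invoked — this is automatic here because the left-hand entry always stays inside $\gamma_{n-1}(G)$. With that checked, both displayed identities follow.
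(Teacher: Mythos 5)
Your argument is correct and is exactly the route the paper intends: it states that the lemma is ``a straightforward application of (\ref{property}), together with Lemma \ref{nilpotent}'' and omits the details, which you have supplied — induction on $a>0$ using the two expansion rules with the conjugations killed by the centrality of $\gamma_n(G)$, plus the standard reduction of negative exponents via $[c^{-1},g_n]=[c,g_n]^{-1}$ and $[c,g_n^{-1}]=[c,g_n]^{-1}$. Your remark that one must check at each step that the commutator being conjugated lies in $\gamma_n(G)$ (guaranteed because the left entry stays in $\gamma_{n-1}(G)$) is precisely the right point of care, and the $n=2$ case is correctly absorbed into the general argument.
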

 
Then the following proposition holds:
 
\begin{proposition} \label{identity}
Let $G$ be a nilpotent group of class $n>1$. Then
\begin{equation}\label{multi}
[g_1,\dots, g_{i-1}, g_i^{a_i}, g_{i+1},\dots,g_{n}]=[g_1,\dots, g_{i-1}, g_i, g_{i+1},\dots,g_{n}]^{a_i},
\end{equation}
for any $i\in\{1,\dots,n\}$, $a_i\in \mathbb{Z}\backslash\{0\}$ and $g_i\in G$.
\end{proposition}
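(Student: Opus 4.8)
The plan is to transport the exponent $a_i$ out of the commutator in three steps, using only the expansion rules (\ref{property}) and the vanishing $\gamma_{n+1}(G)=1$ guaranteed by Lemma \ref{nilpotent}. Write $w=[g_1,\dots,g_i]$, which lies in $\gamma_i(G)$, and if $i\ge 2$ write also $c=[g_1,\dots,g_{i-1}]\in\gamma_{i-1}(G)$, so that $w=[c,g_i]$. The first step isolates the inner part and replaces $g_i^{a_i}$ by an $a_i$-th power of $w$: for $i\ge 2$, since $[c,xy]=[c,y][c,x]^y$ by (\ref{property}) and $[c,x]\in\gamma_i(G)$, conjugation by $y$ fixes $[c,x]$ modulo $\gamma_{i+1}(G)$, so $x\mapsto[c,x]$ induces a homomorphism $G\to\gamma_i(G)/\gamma_{i+1}(G)$; an induction on $|a_i|$, with the sign handled via $[c,x^{-1}]\equiv[c,x]^{-1}$, then gives
$$[g_1,\dots,g_{i-1},g_i^{a_i}]=[c,g_i^{a_i}]=w^{a_i}z$$
for some $z\in\gamma_{i+1}(G)$; for $i=1$ the analogous statement is the triviality $[g_1^{a_1}]=g_1^{a_1}=w^{a_1}$ with $z=1$.

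The second step shows this correction $z$ becomes invisible once we commute with the remaining entries, i.e.
$$[w^{a_i}z,g_{i+1},\dots,g_n]=[w^{a_i},g_{i+1},\dots,g_n].$$
The mechanism is that whenever $y\in\gamma_m(G)$ and $e\in\gamma_{m+1}(G)$, expanding $[ye,h]=[y,h]^e[e,h]=[y,h]\,[[y,h],e]\,[e,h]$ via (\ref{property}) shows $[ye,h]$ equals $[y,h]$ times an element of $\gamma_{m+2}(G)$, because both $[e,h]$ and $[[y,h],e]\in[\gamma_{m+1}(G),\gamma_{m+1}(G)]$ lie in $\gamma_{m+2}(G)$. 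Iterating this over the $n-i$ entries $g_{i+1},\dots,g_n$, starting from $m=i$, all the accumulated error factors get pushed into $\gamma_{n+1}(G)=1$, which yields the displayed equality. When $i=n$ there is nothing to do here, since already $z\in\gamma_{n+1}(G)=1$; this recovers the second identity of Lemma \ref{1}.

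The third step extracts $a_i$ from $[w^{a_i},g_{i+1},\dots,g_n]$, proving
$$[w^{a_i},g_{i+1},\dots,g_n]=[w,g_{i+1},\dots,g_n]^{a_i}=[g_1,\dots,g_n]^{a_i}$$
by induction on $n-i$. The base case $n-i=1$ is exactly the first identity of Lemma \ref{1}. For the inductive step I would argue as in the first two steps: since $w\in\gamma_i(G)$, a computation using $[xy,z]=[x,z]^y[y,z]$ gives $[w^{a_i},g_{i+1}]=[w,g_{i+1}]^{a_i}z'$ with $z'\in\gamma_{i+2}(G)$, the second step absorbs $z'$, and the inductive hypothesis applies to the weight-$(i+1)$ commutator $[w,g_{i+1}]=[g_1,\dots,g_{i+1}]$ together with its $n-i-1$ trailing entries. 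Chaining the three steps establishes (\ref{multi}).

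I expect the only genuine nuisance to be the bookkeeping in the second step: one has to pin down precisely which term of the lower central series each correction factor occupies and verify that the available number of commutations annihilates it, all the while keeping the argument uniform in the sign of $a_i$ and in the position $i$ (including the degenerate cases $i=1$ and $i=n$). None of this is deep; everything else is a direct application of (\ref{property}) and Lemma \ref{1}.
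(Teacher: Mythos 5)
Your proof is correct, but it takes a genuinely different route from the paper's. The paper argues by induction on the nilpotency class $n$: it passes to $G/\gamma_n(G)$, which has class $n-1$, applies the induction hypothesis to the inner commutator $[g_1,\dots,g_i^{a_i},\dots,g_{n-1}]$ to write it as $[g_1,\dots,g_{n-1}]^{a_i}h$ with $h\in\gamma_n(G)$ central, and then kills $h$ with a single application of (\ref{property}) before invoking Lemma \ref{1}. You instead work entirely inside $G$ and split the computation at position $i$: extract $a_i$ from the weight-$i$ inner commutator modulo $\gamma_{i+1}(G)$ via the induced homomorphism $x\mapsto [c,x]\gamma_{i+1}(G)$, show the $\gamma_{i+1}(G)$-correction is annihilated because each of the remaining $n-i$ commutations pushes it one step further down the lower central series until it lands in $\gamma_{n+1}(G)=1$, and then pull $a_i$ through the trailing entries by induction on $n-i$. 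The paper's argument buys brevity — the error term is immediately central, so no bookkeeping is needed — while yours buys uniformity and transparency: it treats every position $i$ (including the degenerate cases $i=1$ and $i=n$) on the same footing, makes explicit which term $\gamma_m(G)$ each correction factor occupies, and in effect isolates the underlying mechanism, namely that commutation is ``linear modulo deeper terms'' of the lower central series. It is also worth noting that the paper's inductive step tacitly assumes $i\le n-1$ when applying the hypothesis to the first $n-1$ entries (the case $i=n$ being exactly the second identity of Lemma \ref{1}), a case distinction your three-step scheme absorbs automatically. All the claims you leave as ``bookkeeping'' — the containments $[\gamma_{m+1}(G),G]\le\gamma_{m+2}(G)$ and $[\gamma_{m+1}(G),\gamma_{m+1}(G)]\le\gamma_{m+2}(G)$, and the count showing $n-i$ commutations suffice — do check out.
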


\begin{proof}
We argue by induction on $n$. The case $n=2$ is true by Lemma \ref{1}.

Let $n>2$. Then $G/\gamma_n(G)$ is nilpotent of class $n-1$. Moreover, $\gamma_n(G)$ is central by Lemma \ref{nilpotent}. Hence the induction hypothesis gives
$$g:=[g_1,\dots, g_{i-1}, g_i^{a_i}, g_{i+1},\dots,g_{n-1}]=[g_1,\dots,g_{n-1}]^{a_i}\ {\rm mod}\ \gamma_n(G).$$
It follows that $g=[g_1,\dots,g_{n-1}]^{a_i}h$ where $h\in\gamma_n(G)$. Since $\gamma_n(G)$ is central, applying (\ref{property}), we get
$$[g,g_n]=[[g_1,\dots,g_{n-1}]^{a_i}h,g_n]=[[g_1,\dots,g_{n-1}]^{a_i},g_n]$$
and so
$$[[g_1,\dots,g_{n-1}]^{a_i},g_n]=[g_1,\dots, g_{i-1}, g_i, g_{i+1},\dots,g_{n}]^{a_i}$$
by Lemma \ref{1}. 
\end{proof}
 
Let $G$ be a nilpotent group of class $n>1$ and $g_1,\dots, g_{n}\in G$. According to Proposition \ref{identity} for any $a_1,\dots,a_{n} \in \mathbb{Z}\backslash\{0\}$, we have
 $$[g_1^{a_1},\dots,g_{n}^{a_{n}}]=[g_1,\dots,g_{n}]^{{\prod}^{n} _{i=1} a_i}.$$
 Therefore we can construct the multilinear map $e:G^{n} \rightarrow G$ given by
$$e({g_1},\dots, {g_{n}})=[g_1,\dots,g_{n}].$$

Similarly, given $x\in G$, we can consider the multilinear map $e':G^{({n-1})} \rightarrow G$ given by
$$e'({g_1},\dots, {g_{n-1}})=[x,{g_1},\dots, {g_{n-1}}].$$
Further, assuming that $G$ is not $(n-1)-$Engel, one can take $x\in G$ in such a way that $e'$ is non-degenerate. In fact there exists $g\in G$ such that $[x,_{n-1} g]\neq 1$.

\section{Multilinear Cryptography using Nilpotent Groups}
Here we propose two multilinear cryptosystems based on the identity (\ref{multi}) in Proposition \ref{identity}.

\subsection{Protocol I}

First we generalize the bilinear map which has been mentioned in \cite{MS}, to multilinear ($n$-linear) map for $n+1$ users. Let ${\cal{A}}_1, \dots, {\cal{A}}_{n+1}$ be the users with private exponents $a_1, \dots, a_{n+1}$ respectively. Given an integer $a\neq 0$, the main formula on which our key-exchange protocol is based on, is an identity in a public nilpotent group $G$ of class $n>1$ (see Proposition \ref{identity}):
$$[{g_1}^a, g_2, \dots, g_n] = [g_1, {g_2}^a, \dots, g_n] = [g_1, g_2, \dots, {g_n}^a]=[g_1, g_2, \dots, g_{n}]^a.$$
The users ${\cal{A}}_j$'s transmit in public channel $${g_i}^{a_j}, \text{ for}\  i = 1, \dots, n; j = 1, \dots, n+1.$$
The key exchange works as follows:
\begin{itemize}
\item The user ${\cal{A}}_1$ can compute $[{g_1}^{a_2}, \dots, {g_{n}}^{a_{n+1}}]^{a_1}$.

\item The user ${\cal{A}}_j$ ($j =2, \dots, n$) can compute $$[{g_1}^{a_1}, \dots, {g_{j-1}}^{a_{j-1}}, {g_j}^{a_{j+1}}, {g_{j+1}}^{a_{j+2}}, \dots, {g_{n}}^{a_{n+1}}]^{a_j}.$$

\item The user ${\cal{A}}_{n+1}$ can compute $[{g_1}^{a_1}, \dots, {g_{n}}^{a_{n}}]^{a_{n+1}}$.
\end{itemize}
The common key is $[g_1,\dots, g_{n}]^{\prod_{j=1} ^{n+1} a_j}$.\\

\noindent {\bf Example: Trilinear Cryptography using Nilpotent Groups of class 3.}
Let $\cal{A}, \cal{B}, \cal{C}, \cal{D}$ be the users with private exponents $a,b,c,d$ respectively. 
The users $\cal{A}$, $\cal{B}$, $\cal{C}$ and $\cal{D}$ transmit in public channel $$x^a, y^a, z^a , x^b, y^b, z^b, x^c, y^c, z^c, x^d, y^d, z^d \text{ respectively.}$$
The key exchange works as follows:
\begin{itemize}
\item The user $\cal{A}$ can compute $[x^b, y^c, z^d]^a$.
\item The user $\cal{B}$ can compute $[x^a, y^c, z^d]^b$.
\item The user $\cal{C}$ can compute $[x^{a}, y^{b}, z^{d}]^{c}$.
\item The user $\cal{D}$ can compute $[x^{a}, y^{b}, z^{c}]^{d}$.
\end{itemize}
The common key is $[x, y, z]^{abcd}$.

\subsection{Protocol II}

Let $G$ be a public nilpotent group of class $n+1$ which is not $n$-Engel ($n\geq 1$). Then there exist $x,g\in G$ such that $[x,_n g]\neq 1$. Suppose that $n+1$ users ${\cal{A}}_1,\dots,{\cal{A}}_{n+1}$ want to agree on a shared secret key.
Each user ${\cal{A}}_j$ selects a private nonzero integer $a_j$, computes $g^{a_j}$ and sends it to the other users. Then:
\begin{itemize}
\item The user ${\cal{A}}_1$ computes $[x^{a_1},g^{a_2},\dots,g^{a_{n+1}}]$.
\item The user ${\cal{A}}_j$ $(j=2\dots,n)$, computes $[x^{a_j}, g^{a_1}, \dots,g^{a_{j-1}},g^{a_{j+1}},\dots, g^{a_{n+1}}]$.
\item The user ${\cal{A}}_{n+1}$ computes $[x^{a_{n+1}},g^{a_1},\dots,g^{a_n}]$.
\end{itemize}
Hence, again by Proposition \ref{identity}, each user obtains $[x,_n g]^{{\prod}_{j=1} ^{n+1} a_j}$ which is the shared key.

\section{Security and Platform Group}
The security of our protocols is based on the discrete logarithm problem (DLP). The ideal platform group for our protocols must be a nonabelian nilpotent group of large order such that the nilpotency class is not too large and the DLP in such a group is hard. Please note that we do not suppose that the group is presented by generating elements and defining relators or as a subgroup of a triangular matrix group over a prime finite field (in finite case) or over the ring of integers (in torsion-free case).

In \cite{AVS2011}, Sutherland has studied the DLP in finite abelian $p$-groups, and showed how to apply the algorithms for $p$-groups to find the structure of any finite abelian group.

In a series of papers by Mahalanobis, 
the DLP has been studied for finite $p$-groups but mostly for nilpotent groups of class $2$ \cite{AMIJM,AM2015}. In particular, in \cite{MS}, Mahalanobis and Shinde proposed $p$-groups of class $2$ in which the platform is not practical as showed by the authors.

\vspace{0.5cm}

\noindent
{\bf Funding}.  The authors were supported by the ``National Group for Algebraic and Geometric Structures, and their Applications'' (GNSAGA -- INdAM). The first author was also partially supported by a PSC-CUNY grant from the CUNY Research Foundation.

\end{document}